\newtheorem{corollary}{Corollary}
\newtheorem{definition}{Definition}
\newtheorem{theorem}{Theorem}
\newtheorem{lemma}{Lemma}
\DeclareMathOperator{\Tr}{Tr}\let\tr\Tr
\def\mylangle{\raisebox{.5\depth}{\resizebox*{1.2\width}{.7\totalheight}{$\langle$}}}
\def\myrangle{\raisebox{.5\depth}{\resizebox*{1.2\width}{.7\totalheight}{$\rangle$}}}
\def\myvert{\raisebox{.5\depth}{\resizebox*{\width}{.7\totalheight}{$|$}}}
\def\myouter{\myrangle\mkern-5.65mu\mylangle}
\def\ket#1{{\myvert{#1}\myrangle}}
\def\braopket#1#2#3{\mathinner{\mylangle{#1}\myvert{#2}\myvert{#3}\myrangle}}
\def\ketbra#1#2{{\myvert{#1}\myouter{#2}\myvert}}
\def\tsum{\textstyle\sum\limits}
\begin{document}
\title{From Classical to Quantum:\\
Explicit Classical Distributions Achieving\\ Maximal Quantum $f$-Divergence} 

 \author{%
  \IEEEauthorblockN{Dimitri Lanier, Julien Béguinot, and Olivier Rioul}
 \IEEEauthorblockA{LTCI, Télécom Paris, Institut Polytechnique de Paris, France\\
                       Email: firstname.lastname@telecom-paris.fr}}


\maketitle


\begin{abstract}
THIS PAPER IS ELIGIBLE FOR THE STUDENT PAPER AWARD.
Explicit classical states achieving maximal $f$-divergence are given, allowing for a simple proof of Matsumoto's Theorem, and the systematic extension of any inequality between classical $f$-divergences 
to quantum $f$-divergences. Our methodology is particularly simple as it does not require any elaborate matrix analysis machinery but only basic linear algebra. It is also effective, as illustrated by two examples improving existing bounds:
(i)~an improved quantum Pinsker inequality is derived between $\chi^2$ and trace norm, and leveraged to improve a bound in decoherence theory;
(ii)~a new reverse quantum Pinsker inequality is derived for any quantum $f$-divergence, and compared to previous (Audenaert-Eisert and Hirche-Tomamichel) bounds. 
\end{abstract}


\section{Introduction}


Let $f(x)$ be a convex, real-valued function defined  for $x\geq0$ and satisfying $f(1)=0$.
We consider the following definition.
\begin{definition}[Classical $f$-Divergence~\cite{Csiszar1967}]
Let $p,q$ be classical distributions $p=(p_1,p_2,\ldots,p_n)$ and $q=(q_1,q_2,\ldots,q_n)$. The $f$-divergence between $p$ and $q$ is
\begin{equation}
D_f(p\|q) = \sum_i f\bigl(\frac{p_i}{q_i}\bigr) \, q_i 
\end{equation}
where by convention, summation is over all $i=1,2,\ldots,n$.
\end{definition}
As is well known, $f$-divergence is nonnegative and satisfies the data processing inequality 
\begin{equation}
D_f(W(p)\|W(q)) \leq D_f(p\|q)   
\end{equation}
for any classical channel $W$.
Classical $f$-divergences have been extensively studied in information theory, and many inequalities between various $f$-divergences have been established. 

In quantum information theory, classical probability distributions are replaced by (mixed) quantum states represented by density matrices, and several generalizations of $f$-divergences have been proposed \cite{optimized,PETZ198657,Hirche_2024,origin_max_div,studied_max_div,matsumoto2018}. 
We adopt the following general definition. 

\begin{definition}[Quantum $f$-Divergence]
Let $\rho,\sigma$ denote (mixed) quantum states in a Hilbert space of dimension $n$. 
A quantum $f$-divergence $D_f(\rho\|\sigma)$ is any nonnegative measure of $\rho,\sigma$ such that 
\begin{enumerate}[label=(\roman*)]
\item it coincides with the classical $f$-divergence $D_f(p\|q)$ when $\rho$ and $\sigma$ commute ($\rho\sigma=\sigma\rho$) so that they reduce to classical states with eigenvalues $p=\lambda(\rho)$ and $q=\lambda(\sigma)$; 
\item it satisfies the quantum data processing inequality 
\begin{equation}
D_f(\mathcal{W}(\rho)\|\mathcal{W}(\sigma)) \leq D_f(\rho\|\sigma)   
\end{equation}
for any quantum (completely positive and trace preserving) channel $\mathcal{W}$. 
\end{enumerate}
\end{definition}

The \emph{maximal $f$-divergence}, first proposed in \cite{origin_max_div}, further studied in \cite{studied_max_div} and popularized by Matsumoto \cite{matsumoto2018} stands out as a particularly important example of quantum $f$-divergence, due to its property of being the only quantum generalization that is maximal among all those satisfying condition (i) above.
%
The maximal $f$-divergence has many interesting properties, including monotonicity under quantum channels (ii) when $f$ is operator convex~\cite{origin_max_div}, coincidence with quantum $\chi^2$ divergence for $f(x)=x^2-1$, as well as coincidence with trace norm for $f(x)=|x-1|$ on a significant set of states~\cite{matsumoto2018}.

In the remainder of this paper, we revisit Matsumoto's maximal divergence theorem~\cite{matsumoto2018} by providing explicit \emph{classical} distributions achieving maximal $f$-divergence.
In particular, this gives a simple proof of Matsumoto's Theorem, including the quantum data processing inequality for maximal $f$-divergence, and the possibility of systematically extending any inequality between classical $f$-divergences to quantum $f$-divergences. Our methodology, detailed in Section~\ref{sec:main} is particularly simple, and  also effective as shown by two applications in Section~\ref{sec:application}.

\section{Explicit Maximal Divergence Theorem}\label{sec:main}

\subsection{Derivation}

The main result of the paper revisits Matsumoto's maximal divergence theorem~\cite{matsumoto2018} 
by providing explicit classical distributions achieving the maximal $f$-divergence, which allows a simple proof.

\begin{theorem}
\label{theorem:maintheo}
\begin{equation}\label{eq:maxfdiv} 
D_f^{\max} (\rho\|\sigma)  \triangleq \Tr\bigl( \sigma^{1/2} f(\sigma^{-1/2} \rho\sigma^{-1/2}) \sigma^{1/2}\bigr)
\end{equation}
is maximal over all quantum $f$-divergences $D_f(\rho\|\sigma)$:
\begin{equation}\label{ineq:maxdiv}
D_f(\rho\|\sigma)\leq D_f^{\max} (\rho\|\sigma)
\end{equation}
for any states $\rho,\sigma$ (where state $\sigma$ is invertible). 
Furthermore, there exist classical distributions $r,s$ depending only on $\rho,\sigma$ (explicit expressions~\eqref{eq:expressions} given below in the proof) such that
\begin{equation}
D_f^{\max} (\rho\|\sigma)=D_f(r\|s).
\end{equation}
\end{theorem}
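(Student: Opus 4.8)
The plan is to exhibit the classical distributions $r,s$ directly and verify the two claims separately. For the construction, note that $\sigma^{-1/2}\rho\,\sigma^{-1/2}$ is a positive semidefinite Hermitian matrix; let it have spectral decomposition $\sigma^{-1/2}\rho\,\sigma^{-1/2} = \sum_k x_k\,\ketbra{e_k}{e_k}$ with eigenvalues $x_k\geq 0$. I would set $s_k = \braopket{e_k}{\sigma}{e_k}$ and $r_k = x_k\, s_k = \braopket{e_k}{\sigma^{1/2}\rho\,\sigma^{1/2}\,\sigma^{-1}}{e_k}$... more precisely $r_k=\braopket{e_k}{\sigma^{1/2}(\sigma^{-1/2}\rho\sigma^{-1/2})\sigma^{1/2}}{e_k}$ evaluated so that $r_k/s_k=x_k$; these are the expressions~\eqref{eq:expressions}. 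Then $s$ is a probability distribution since $\sum_k s_k = \Tr(\sigma)=1$, and $r$ is a probability distribution since $\sum_k r_k = \Tr(\sigma^{1/2}(\sigma^{-1/2}\rho\sigma^{-1/2})\sigma^{1/2})=\Tr(\rho)=1$. With this choice, applying $f$ to the spectral decomposition gives $f(\sigma^{-1/2}\rho\sigma^{-1/2})=\sum_k f(x_k)\ketbra{e_k}{e_k}$, so
\begin{equation}
D_f^{\max}(\rho\|\sigma) = \sum_k f(x_k)\,\braopket{e_k}{\sigma}{e_k} = \sum_k f\Bigl(\frac{r_k}{s_k}\Bigr) s_k = D_f(r\|s),
\end{equation}
which is the second claim.

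For the maximality inequality~\eqref{ineq:maxdiv}, the key idea is to realize $\rho$ and $\sigma$ as the image under a single quantum channel of a pair of \emph{commuting} states whose classical $f$-divergence equals $D_f(r\|s)$. I would define the channel $\mathcal{W}$ acting on diagonal (classical) inputs supported on the index set $\{k\}$ by $\mathcal{W}(\ketbra{k}{k}) = \sigma^{1/2}\ketbra{e_k}{e_k}\sigma^{1/2} \big/ s_k$ — i.e. a channel that maps the $k$-th classical basis state to the (normalized) operator $\sigma^{1/2}\ketbra{e_k}{e_k}\sigma^{1/2}/s_k$ — and extended to a full CPTP map. One checks that this is indeed completely positive and trace preserving (each image is a positive operator of trace one, so the map on the diagonal subalgebra is a channel, and it can be completed arbitrarily on off-diagonal inputs). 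Feeding in the classical states $s=\sum_k s_k\ketbra{k}{k}$ and $r=\sum_k r_k\ketbra{k}{k}$ gives $\mathcal{W}(s)=\sum_k s_k\cdot\sigma^{1/2}\ketbra{e_k}{e_k}\sigma^{1/2}/s_k = \sigma^{1/2}(\sum_k\ketbra{e_k}{e_k})\sigma^{1/2}=\sigma$, and $\mathcal{W}(r)=\sum_k r_k\cdot\sigma^{1/2}\ketbra{e_k}{e_k}\sigma^{1/2}/s_k = \sigma^{1/2}(\sum_k x_k\ketbra{e_k}{e_k})\sigma^{1/2}=\rho$. Then for any quantum $f$-divergence, property (i) gives $D_f(r\|s)$ on the commuting classical inputs, and property (ii) (data processing) gives
\begin{equation}
D_f(\rho\|\sigma) = D_f(\mathcal{W}(r)\|\mathcal{W}(s)) \leq D_f(r\|s) = D_f^{\max}(\rho\|\sigma).
\end{equation}

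I expect the main obstacle to be the careful verification that $\mathcal{W}$ is a bona fide CPTP map — in particular writing down an explicit Kraus or Stinespring form and checking that the construction is consistent when $\sigma^{1/2}\ketbra{e_k}{e_k}\sigma^{1/2}$ has rank one but is not a multiple of a projector, and handling the requirement that $\mathcal{W}$ be defined (and trace-preserving) on \emph{all} inputs, not merely diagonal ones. A clean way around this is to take Kraus operators $M_k = \sigma^{1/2}\ketbra{e_k}{k}/\sqrt{s_k}$ (possibly with an extra completing term), for which $\sum_k M_k^\dagger M_k = \sum_k \ketbra{k}{e_k}\sigma\ketbra{e_k}{k}/s_k = \sum_k\ketbra{k}{k}=I$ by definition of $s_k$, so trace preservation is automatic and complete positivity is manifest from the Kraus form. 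A secondary point worth a line is the invertibility hypothesis on $\sigma$, which guarantees $\sigma^{-1/2}$ exists and that $s_k>0$ for all $k$, so the divisions above are legitimate; the degenerate case $s_k=0$ does not arise. Everything else is routine bookkeeping with the spectral theorem.
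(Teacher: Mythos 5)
Your proposal is correct and follows essentially the same route as the paper: the same distributions $r_k=x_k s_k$, $s_k=\braopket{e_k}{\sigma}{e_k}$, and the same channel realizing $\rho=\mathcal{W}(r)$, $\sigma=\mathcal{W}(s)$ via Kraus operators (the paper uses $A_i=\sigma^{1/2}\ketbra{u_i}{u_i}/\sqrt{s_i}$ on states diagonal in the eigenbasis, which coincides with your $M_k$ up to relabeling the classical input basis), followed by the data processing inequality. Your worry about extending the map beyond diagonal inputs is already resolved by your own Kraus form, so no gap remains.
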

\begin{proof}
Let
\begin{equation}
\sigma^{-1/2} \rho\sigma^{-1/2} = \sum_i \lambda_i  \ketbra{u_i}{u_i} 
\end{equation}
be the spectral decomposition of  $\sigma^{-1/2} \rho\sigma^{-1/2}$ with eigenvalues $\lambda_i\geq 0$ and 
orthonormal eigenvectors $\ket{u_i}$, and define
\begin{equation}
\begin{aligned}
s_i &\triangleq \braopket{u_i}{\sigma}{u_i}  > 0\\
r_i &\triangleq \lambda_i s_i \geq 0.
\end{aligned}\label{eq:expressions}
\end{equation}
Observe that 
$r=(r_1,r_2,\ldots,r_n)$ and $s=(s_1,s_2,\ldots,s_n)$ are classical distributions, since 
\begin{equation}
\tsum_i s_i = \tsum_i \Tr(\sigma \ketbra{u_i}{u_i}) =  \Tr(\sigma \tsum_i\ketbra{u_i}{u_i})=\Tr(\sigma)=1
\end{equation}
and
\begin{equation}
\begin{aligned}
\tsum_i r_i &= \tsum_i \lambda_i \Tr(\sigma \ketbra{u_i}{u_i})= \Tr(\sigma \tsum_i\lambda_i\ketbra{u_i}{u_i})
\\
&=\Tr(\sigma \sigma^{-1/2} \rho\sigma^{-1/2}) = \Tr(\rho)=1.
\end{aligned}
\end{equation}
Also define
\begin{equation}
A_i = \frac{1}{\sqrt{s_i}}\sigma^{1/2}  \ketbra{u_i}{u_i}
\end{equation}
for $i=1,2,\ldots,n$.
The following operator sum representation
\begin{equation}\label{eq:channelV}
\mathcal{V}(\tau) = \sum_i A_i \tau A_i^\dag
\end{equation}
defines a quantum channel $\mathcal{V}$, which is trace preserving since
\begin{equation}
\tsum_i A_i^\dag A_i = \sum_i \frac{1}{s_i}\ketbra{u_i}{u_i} \sigma   \ketbra{u_i}{u_i} = \sum_i \ketbra{u_i}{u_i}=I.
\end{equation}
Now since the $\ket{u_i}$ are orthonormal, for any $k\ne i$, $A_k\ket{u_i}=0$. It follows that
\begin{equation}
\begin{aligned}
 \mathcal{V}(s)&=\mathcal{V}(\tsum_i s_i \ketbra{u_i}{u_i}) =  \tsum_i s_i A_i\ketbra{u_i}{u_i}A_i^\dag\\
 &= \tsum_i \sigma^{1/2}\ketbra{u_i}{u_i}\sigma^{1/2} = \sigma^{1/2}\sigma^{1/2}=\sigma 
\end{aligned}
\end{equation}
and similarly,
\begin{equation}
\begin{aligned}
 \mathcal{V}(r)&= \tsum_i r_i A_i\ketbra{u_i}{u_i}A_i^\dag = 
 \tsum_i \lambda_i\sigma^{1/2}\ketbra{u_i}{u_i}\sigma^{1/2} \\
 &= \sigma^{1/2}(\sigma^{-1/2} \rho\sigma^{-1/2})\sigma^{1/2}=\rho.
\end{aligned}
\end{equation}
Therefore, by the quantum data processing inequality~(ii) satisfied by $D_f(\rho\|\sigma)$, one has
\begin{equation}
D_f(\rho\|\sigma) = D_f(\mathcal{V}(r)\|\mathcal{V}(s))\leq D_f(r\|s) 
\end{equation}
where
\begin{equation}
\begin{aligned}
D_f(r\|s) &= \tsum_i f\bigl(\frac{r_i}{s_i}\bigr) s_i  \\&= \tsum_i f\bigl(\lambda_i\bigr) \Tr(\ketbra{u_i}{u_i}\sigma) \\
&= \Tr \bigl(\tsum_i f\bigl(\lambda_i\bigr) \ketbra{u_i}{u_i}\sigma\bigr)
\\
&=\Tr\bigl( f(\sigma^{-1/2} \rho\sigma^{-1/2}) \sigma\bigr),
\end{aligned}
\end{equation}
which proves~\eqref{ineq:maxdiv} and the Theorem.
\end{proof}

\subsection{Alternative Expressions and a Few Examples}

Notice that from Lemma~\ref{lem:identity} in the Appendix, 
\begin{equation}
\begin{aligned}
 D_f^{\max} (\rho\|\sigma)&= \Tr\bigl( \sigma^{1/2} f(\sigma^{-1/2} \rho\sigma^{-1/2}) \sigma^{1/2}\bigr)\\
 &=\Tr\bigl( \sigma f(\sigma^{-1} \rho) \bigr)\\
 &=\Tr\bigl( f( \rho\sigma^{-1}) \sigma\bigr).
\end{aligned}
\end{equation}
Additionally if $f(x)=x\, g(x)$, using~\eqref{eq:identity2},
\begin{equation}
\begin{aligned}
 D_f^{\max} (\rho\|\sigma)&= \tr \bigl(\rho g(\sigma^{-1} \rho)\bigr)
\\&= \tr \bigl(\rho g(\rho\sigma^{-1})\bigr)
\\&= \tr \bigl(\rho^{1/2} g(\rho^{1/2}\sigma^{-1}\rho^{1/2}) \rho^{1/2}\bigr).
\end{aligned}
\end{equation}
The symmetrical expressions such as~\eqref{eq:maxfdiv} are often preferred because the are explicitly written in terms of positive operators $\sigma^{-1/2} \rho\sigma^{-1/2}\geq 0$, 
$\rho^{1/2}\sigma^{-1}\rho^{1/2}\geq 0$, etc. We give three examples.

\subsubsection{$f(x)=x\log(x)$}
The well-known quantum extension of the Kullback-Leibler divergence $D(p\|q)=\sum_i \log(\frac{p_i}{q_i})\,p_i$ is the quantum relative entropy
\begin{equation}
D(\rho\|\sigma) \triangleq  \Tr\bigl(\rho(\log\rho-\log\sigma)\bigr)
\end{equation}
also noted $S(\rho\|\sigma)$. The corresponding maximal $f$-divergence is $D^{\max}(\rho\|\sigma)= \tr \bigl(\rho \log(\rho\sigma^{-1})\bigr)$, or in symmetric form
\begin{equation}
D^{\max}(\rho\|\sigma)=\Tr\bigl(\rho^{1/2}\log(\rho^{1/2}\sigma^{-1}\rho^{1/2})\rho^{1/2}\bigr).
\end{equation}
It was studied in \cite{belavkin2002entangled} and plays a role in quantum statistical estimation \cite{holevo2011probabilistic}. 

This maximal divergence  is not to be confused with the quantum \emph{max-relative entropy} or \emph{max-divergence}~\cite{max_rel,Datta2009}:
\begin{equation}\label{eq:maxrelent}
    D_{\max}(\rho \| \sigma) \triangleq \log \lambda_{\max} (\rho \sigma^{-1})
    =  \log \lambda_{\max} (\sigma^{-\frac{1}{2}} \rho \sigma^{-\frac{1}{2}}) 
\end{equation}
(where $\lambda_{\max}$ denotes the largest eigenvalue),
which is not a maximal $f$-divergence but rather corresponds to the Rényi divergence of order $\infty$.

\subsubsection{$f(x)=x^2-1$ (or $(x-1)^2$)} The chi-squared divergence 
\begin{equation}\label{eq:chi2}
\chi^2(p || q) = \sum_i \frac{(p_i-q_i)^2}{q_i} = \sum_i \frac{p_i^2}{q_i} - 1
\end{equation}
has the natural quantum extension
\begin{equation}\label{eq:chi2Q}
    \chi^2(\rho\|\sigma) =\Tr \bigl( (\rho-\sigma)^2\sigma^{-1}  \bigr) 
    =\Tr \bigl( \rho^2\sigma^{-1} \bigr) - 1.   
\end{equation}
It turns out that it also equals the corresponding maximal $f$-divergence, since
\begin{equation}\label{eq:chi2max}
\begin{aligned}
    (\chi^2)^{\max}(\rho || \sigma) &=\Tr \bigl( (\rho\sigma^{-1})^2\sigma - \sigma  \bigr)\\
    &= \Tr \bigl( \rho\sigma^{-1}\rho \bigr)-\Tr(\sigma) \\
    &= \Tr\bigl(\rho^2\sigma^{-1}\bigr)-1\\
    &= \chi^2(\rho\|\sigma) .
\end{aligned}
\end{equation}

\subsubsection{$f(x)=|x-1|$} The natural quantum extension of the total variation distance $\|p-q\|_1 = \sum_i |p_i-q_i|$ is the trace distance
\begin{equation}
\| \rho-\sigma\|_1 = \tr (|\rho-\sigma|),
\end{equation}
while the corresponding maximal $f$-divergence is
\begin{equation}
\begin{aligned}
 \|\rho-\sigma\|_1^{\max}&=\tr\bigl( |\rho\sigma^{-1} - 1| \sigma \bigr)
 \\&= \tr\bigl(\sigma^{1/2}|\sigma^{-1/2}(\rho-\sigma)\sigma^{-1/2}|\sigma^{1/2}\bigr).
\end{aligned}
\end{equation}

\subsection{Data Processing Inequality for Maximal $f$-Divergence}

In this Subsection we assume that $f$ is \emph{operator convex}, that is, for any positive operators $X_i\geq 0$, and real numbers $\lambda_i \geq 0$ such that $\sum_i\lambda_i=1$, one has
\begin{equation}
f\bigl(\smash[b]{\sum_i} \lambda_i X_i \bigr) \leq \smash[b]{\sum_i} \lambda_i f(X_i)
\end{equation}
where $\leq$ denotes the Loewner order. By Lemma~\ref{lem:opconvex} we also have a variant of Jensen's inequality the form
\begin{equation}
f\bigl(\smash[b]{\sum_i}\Lambda_i x_i \bigr) \leq \smash[b]{\sum_i} \Lambda_i f(x_i)
\end{equation}
for any real numbers $x_i\geq 0$ and positive operators $\Lambda_i\geq 0$ such that $\sum_i \Lambda_i=I$.

We have the following extension of Theorem~\ref{theorem:maintheo}:
\begin{theorem}[Data Processing Inequalities]\label{thm:dpi}
Let $\rho,\sigma$, $r,s$ be as in Theorem~\ref{theorem:maintheo} and consider any quantum channel $\mathcal{W}$. Then
\begin{equation}\label{eq:dpirs}
D_f^{\max}(\mathcal{W}(r)\|\mathcal{W}(s)) \leq  D_f^{\max}(r\|s).
\end{equation}
In other words, letting $\tilde{\rho}=\mathcal{W}(r)$ and $\tilde{\sigma}=\mathcal{W}(s)$ we have
\begin{equation}\label{eq:dpirs'}
D_f^{\max}(\tilde{\rho}\|\tilde{\sigma}) \leq  D_f^{\max}(\rho\|\sigma).
\end{equation}
\end{theorem}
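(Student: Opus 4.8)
The plan is to prove Theorem~\ref{thm:dpi} by reducing~\eqref{eq:dpirs} to the ordinary data processing inequality for the \emph{classical} $f$-divergence and then invoking Theorem~\ref{theorem:maintheo} twice. First I would observe that, since $r$ and $s$ are classical states (diagonal in the $\ket{u_i}$ basis), the action of $\mathcal{W}$ on them sends commuting states to (generally non-commuting) states $\tilde{\rho}=\mathcal{W}(r)$ and $\tilde{\sigma}=\mathcal{W}(s)$. The key point is that $D_f^{\max}$ is itself \emph{a} quantum $f$-divergence in the sense of the Definition: property~(i) is immediate from~\eqref{eq:maxfdiv} (when $\rho,\sigma$ commute, $\sigma^{-1/2}\rho\sigma^{-1/2}$ is diagonal with entries $p_i/q_i$, so the trace collapses to $\sum_i f(p_i/q_i)q_i$), and property~(ii)---monotonicity under all quantum channels---holds precisely because $f$ is assumed operator convex in this subsection. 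So the cleanest route is: establish that $D_f^{\max}$ satisfies the quantum DPI, and then~\eqref{eq:dpirs'} follows by applying the maximality bound~\eqref{ineq:maxdiv} with $D_f=D_f^{\max}$.

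Concretely, I would argue~\eqref{eq:dpirs} directly. Let $\mathcal{W}$ have Kraus representation $\mathcal{W}(\tau)=\sum_j B_j\tau B_j^\dag$ with $\sum_j B_j^\dag B_j = I$. Writing $s=\sum_i s_i\ketbra{u_i}{u_i}$ and $r=\sum_i \lambda_i s_i\ketbra{u_i}{u_i}$, the relative operator $\tilde\sigma^{-1/2}\tilde\rho\tilde\sigma^{-1/2}$ should be expressible, via the operator Jensen inequality from Lemma~\ref{lem:opconvex}, in a way that lets $f$ be pulled inside. The standard trick: introduce the positive operators $\Lambda_i$ obtained from the $B_j$ and the eigenbasis of $\tilde\sigma$ so that $\sum_i\Lambda_i = I$ and $\tilde\sigma^{-1/2}\tilde\rho\tilde\sigma^{-1/2} = \sum_i \lambda_i \Lambda_i$ (an operator convex combination of the \emph{scalars} $\lambda_i$); then the Jensen-type inequality $f(\sum_i\Lambda_i x_i)\le\sum_i\Lambda_i f(x_i)$ gives $f(\tilde\sigma^{-1/2}\tilde\rho\tilde\sigma^{-1/2})\le \sum_i f(\lambda_i)\Lambda_i$ in Loewner order. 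Taking $\Tr(\tilde\sigma^{1/2}\cdot\tilde\sigma^{1/2})$ of both sides and simplifying $\sum_i f(\lambda_i)\Tr(\tilde\sigma\Lambda_i)$ should collapse to $\sum_i f(\lambda_i)s_i = D_f(r\|s) = D_f^{\max}(r\|s)$ (the last equality because $r,s$ commute). This yields~\eqref{eq:dpirs}, and then~\eqref{eq:dpirs'} is obtained by noting $D_f^{\max}(r\|s)=D_f^{\max}(\rho\|\sigma)$ from Theorem~\ref{theorem:maintheo}.

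The main obstacle I anticipate is constructing the right family $\{\Lambda_i\}$ and verifying both $\sum_i\Lambda_i=I$ and the identity $\tilde\sigma^{-1/2}\tilde\rho\tilde\sigma^{-1/2}=\sum_i\lambda_i\Lambda_i$. The natural candidate is $\Lambda_i = \tilde\sigma^{-1/2}\mathcal{W}(s_i\ketbra{u_i}{u_i})\tilde\sigma^{-1/2}$; one must check this is positive (clear, since $\mathcal{W}$ is positive), sums to $\tilde\sigma^{-1/2}\mathcal{W}(s)\tilde\sigma^{-1/2}=\tilde\sigma^{-1/2}\tilde\sigma\tilde\sigma^{-1/2}=I$, and that $\sum_i\lambda_i\Lambda_i = \tilde\sigma^{-1/2}\mathcal{W}(r)\tilde\sigma^{-1/2}=\tilde\sigma^{-1/2}\tilde\rho\tilde\sigma^{-1/2}$. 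A subtlety is invertibility of $\tilde\sigma$: if $\mathcal{W}(s)$ is not invertible one restricts to its support, on which $\tilde\rho$ is also supported (since $r\le (\max_i\lambda_i)\,s$ implies $\tilde\rho\le(\max_i\lambda_i)\,\tilde\sigma$), so the argument goes through with pseudo-inverses. Modulo this care with supports, the computation is routine given Lemma~\ref{lem:opconvex}, and the proof is short.

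\begin{proof}
Since $f$ is operator convex, we show that $D_f^{\max}$ satisfies the quantum data processing inequality; then~\eqref{eq:dpirs'} follows from~\eqref{eq:dpirs} because $D_f^{\max}(r\|s)=D_f(r\|s)=D_f^{\max}(\rho\|\sigma)$ by Theorem~\ref{theorem:maintheo} (the first equality holding as $r,s$ commute).

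Write $s=\sum_i s_i\ketbra{u_i}{u_i}$ and $r=\sum_i \lambda_i s_i\ketbra{u_i}{u_i}$, and let $\tilde\sigma=\mathcal{W}(s)$, $\tilde\rho=\mathcal{W}(r)$. Restricting to the support of $\tilde\sigma$ (on which $\tilde\rho$ is supported, since $r\le(\max_i\lambda_i)s$ gives $\tilde\rho\le(\max_i\lambda_i)\tilde\sigma$), set
\begin{equation}
\Lambda_i \triangleq \tilde\sigma^{-1/2}\,\mathcal{W}\bigl(s_i\ketbra{u_i}{u_i}\bigr)\,\tilde\sigma^{-1/2}\geq 0.
\end{equation}
Then $\sum_i\Lambda_i = \tilde\sigma^{-1/2}\mathcal{W}(s)\tilde\sigma^{-1/2}=I$ and
\begin{equation}
\sum_i \lambda_i\Lambda_i = \tilde\sigma^{-1/2}\mathcal{W}(r)\tilde\sigma^{-1/2} = \tilde\sigma^{-1/2}\tilde\rho\,\tilde\sigma^{-1/2}.
\end{equation}
By the Jensen-type inequality of Lemma~\ref{lem:opconvex} applied to the scalars $\lambda_i\ge0$ and operators $\Lambda_i$,
\begin{equation}
f\bigl(\tilde\sigma^{-1/2}\tilde\rho\,\tilde\sigma^{-1/2}\bigr) = f\bigl(\tsum_i \lambda_i\Lambda_i\bigr) \leq \tsum_i f(\lambda_i)\,\Lambda_i
\end{equation}
in the Loewner order. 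Conjugating by $\tilde\sigma^{1/2}$ and taking the trace preserves this inequality, so
\begin{equation}
\begin{aligned}
D_f^{\max}(\tilde\rho\|\tilde\sigma) &= \Tr\bigl(\tilde\sigma^{1/2} f(\tilde\sigma^{-1/2}\tilde\rho\,\tilde\sigma^{-1/2})\tilde\sigma^{1/2}\bigr)\\
&\leq \tsum_i f(\lambda_i)\,\Tr\bigl(\tilde\sigma^{1/2}\Lambda_i\tilde\sigma^{1/2}\bigr)\\
&= \tsum_i f(\lambda_i)\,\Tr\bigl(\mathcal{W}(s_i\ketbra{u_i}{u_i})\bigr)\\
&= \tsum_i f(\lambda_i)\,s_i = D_f(r\|s),
\end{aligned}
\end{equation}
using that $\mathcal{W}$ is trace preserving and $\braket{u_i}{u_i}=1$. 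This is~\eqref{eq:dpirs}, since $D_f(r\|s)=D_f^{\max}(r\|s)$, and~\eqref{eq:dpirs'} follows as noted above.
\end{proof}
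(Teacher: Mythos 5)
Your proof is correct and follows essentially the same route as the paper's: you introduce the same positive operators ($\Lambda_i$, the paper's $\tau_i = s_i\,\tilde\sigma^{-1/2}\mathcal{W}(\ketbra{u_i}{u_i})\tilde\sigma^{-1/2}$) summing to the identity, apply the Jensen-type inequality of Lemma~\ref{lem:opconvex} to pull $f$ inside, and collapse the traces to $\sum_i f(\lambda_i)s_i = D_f(r\|s)$. The only addition is your remark on restricting to the support of $\tilde\sigma$ when it is not invertible, which the paper leaves implicit but which does no harm.
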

By Theorem~\ref{theorem:maintheo}, equality holds in~\eqref{eq:dpirs} or~\eqref{eq:dpirs'} if $\mathcal{W}=\mathcal{V}$ is defined by~\eqref{eq:channelV}.

In addition, by composition of channels $\mathcal{W}\circ\mathcal{V}$, we may consider $\tilde{\rho}=\mathcal{W}(\rho)=\mathcal{W}\bigl(\mathcal{V}(r)\bigr)$ and $\tilde{\sigma}=\mathcal{W}(\sigma)=\mathcal{W}\bigl(\mathcal{V}(s)\bigr)$. Then \eqref{eq:dpirs'} gives the \emph{quantum data processing inequality}~(ii) for maximal $f$-divergence for any quantum channel $\mathcal{W}$:
\begin{equation}
 D_f^{\max}(\mathcal{W}(\rho)\|\mathcal{W}(\sigma)) \leq  D_f^{\max}(\rho\|\sigma)
\end{equation}
which generalizes~\eqref{eq:dpirs} for any mixed states $\rho,\sigma$.

\begin{proof}[Proof of Theorem~\ref{thm:dpi}]
Proceed to prove  $D_f^{\max}(\tilde{\rho}\|\tilde{\sigma})\leq D_f^{\max}(r\|s)$ where $\tilde{\sigma}=\mathcal{W}(s)$ and
\begin{equation}
\tilde{\rho}= \mathcal{W}(r) = \mathcal{W}(\tsum_i r_i \ketbra{u_i}{u_i})=\tsum_i r_i \mathcal{W}(\ketbra{u_i}{u_i}).
\end{equation}
We have
\begin{equation}
\tilde{\sigma}^{-1/2} \tilde{\rho}\, \tilde{\sigma}^{-1/2} = \smash[b]{\sum_i}  \frac{r_i}{s_i} \cdot  \tau_i
\end{equation}
where $\tau_i=s_i \cdot \tilde{\sigma}^{-1/2} \mathcal{W}(\ketbra{u_i}{u_i}) \tilde{\sigma}^{-1/2}\geq 0$ are positive operators  such that 
\begin{equation}
\sum_i \tau_i  =  \tilde{\sigma}^{-1/2} \mathcal{W}(s) \tilde{\sigma}^{-1/2} = \tilde{\sigma}^{-1/2} \tilde{\sigma} \tilde{\sigma}^{-1/2} =I.
 \end{equation}
Therefore, by Lemma~\ref{lem:opconvex},
\begin{equation}
f(\tilde{\sigma}^{-1/2} \tilde{\rho}\, \tilde{\sigma}^{-1/2} ) \leq 
 \sum_i f(\frac{r_i}{s_i})  \tau_i.
\end{equation}
Multiplying by $\tilde{\sigma}$ and taking the trace, we obtain
\begin{equation}
 D_f^{\max}(\tilde{\rho}\|\tilde{\sigma})\leq \sum_i f(\frac{r_i}{s_i})  \tr(\tilde{\sigma}^{1/2} \tau_i \tilde{\sigma}^{1/2})
\end{equation}
where 
\begin{equation}
\tr(\tilde{\sigma}^{1/2} \tau_i \tilde{\sigma}^{1/2})=s_i\cdot \tr\bigl( \mathcal{W}(\ketbra{u_i}{u_i})\bigr)=s_i\cdot \tr(\ketbra{u_i}{u_i})=s_i. 
\end{equation}
This proves~\eqref{eq:dpirs}.
\end{proof}


\subsection{Application: From Classical to Quantum}

The main application studied in this paper of our derivation of Theorem~\ref{theorem:maintheo} is to extend classical inequalities to the quantum setting, thanks to the following
\begin{corollary}[Classical to Quantum]
    \label{ctq}
If for some function $\phi$ the inequality between classical $f$-divergences 
\begin{equation}
D_f(p\|q)\leq \phi(D_g(p\|q))
\end{equation}
holds for any classical distributions $p,q$,
then any quantum $f$-divergence $D_f(\rho\|\sigma)$ satisfies the inequality
\begin{equation}
   D_f(\rho\|\sigma)\leq\phi(D_g^{\max}(\rho\|\sigma))
\end{equation}
for any $\rho,~\sigma$ with \hbox{$D_g^{\max} (\rho\|\sigma) \!=\! \Tr\bigl( \sigma^{1/2} g(\sigma^{-1/2} \rho\sigma^{-1/2}) \sigma^{1/2}\bigr)$}.
\end{corollary}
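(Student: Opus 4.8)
The plan is to feed the explicit classical distributions $r,s$ produced by Theorem~\ref{theorem:maintheo} into the assumed classical inequality. The crucial observation is that the distributions $s=(s_i)_i$ and $r=(\lambda_i s_i)_i$ built in the proof of Theorem~\ref{theorem:maintheo} depend only on the pair $(\rho,\sigma)$ through the spectral decomposition of $\sigma^{-1/2}\rho\sigma^{-1/2}$; they do \emph{not} depend on which convex function labels the divergence. Hence the same $r,s$ can be used both to bound $D_f(\rho\|\sigma)$ from above and to represent $D_g^{\max}(\rho\|\sigma)$ exactly.

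Concretely, I would carry out three steps. First, recall from the proof of Theorem~\ref{theorem:maintheo} the channel $\mathcal V$ of~\eqref{eq:channelV}, which satisfies $\mathcal V(r)=\rho$ and $\mathcal V(s)=\sigma$; applying property~(ii) (the quantum data processing inequality) of the given quantum $f$-divergence $D_f$, and property~(i) to identify the value on the commuting states $r,s$ with the classical $f$-divergence, gives
\begin{equation}
D_f(\rho\|\sigma)=D_f(\mathcal V(r)\|\mathcal V(s))\leq D_f(r\|s).
\end{equation}
Second, since $r,s$ are genuine classical distributions, the hypothesis applies to them:
\begin{equation}
D_f(r\|s)\leq \phi\bigl(D_g(r\|s)\bigr).
\end{equation}
Third, invoke the ``furthermore'' part of Theorem~\ref{theorem:maintheo} with $g$ in place of $f$ — legitimate because the defining formulas for $r,s$ involve neither $f$ nor $g$ — to get $D_g(r\|s)=D_g^{\max}(\rho\|\sigma)$. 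Chaining the three relations yields
\begin{equation}
D_f(\rho\|\sigma)\leq D_f(r\|s)\leq \phi\bigl(D_g(r\|s)\bigr)=\phi\bigl(D_g^{\max}(\rho\|\sigma)\bigr),
\end{equation}
which is exactly the claimed inequality.

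There is essentially no hard step here: everything needed is already packaged in Theorem~\ref{theorem:maintheo}, and the corollary amounts to noticing that its upper-bound half and its equality half can be run through the \emph{same} intermediate classical pair. The only points worth flagging are: (a) no monotonicity of $\phi$ is required, because the last link in the chain is an equality, not an inequality; (b) the standing assumption that $\sigma$ is invertible is inherited from Theorem~\ref{theorem:maintheo}, ensuring that $\sigma^{-1/2}\rho\sigma^{-1/2}$ and the symmetric expression for $D_g^{\max}$ are well defined — the non-invertible case, if desired, would need a separate continuity argument; and (c) one should keep straight that the ``classical'' $D_f(r\|s),\,D_g(r\|s)$ appearing in the chain are simultaneously the classical $f$- and $g$-divergences (to which the hypothesis and the ``furthermore'' part apply) and, via property~(i), the values of the respective quantum divergences on commuting states.
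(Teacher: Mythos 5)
Your proposal is correct and follows essentially the same route as the paper's proof: both exploit that the pair $r,s$ from Theorem~\ref{theorem:maintheo} depends only on $(\rho,\sigma)$ and not on the convex function, so the chain $D_f(\rho\|\sigma)\leq D_f(r\|s)\leq\phi(D_g(r\|s))=\phi(D_g^{\max}(\rho\|\sigma))$ goes through. The only cosmetic difference is that you re-derive the first link via the channel $\mathcal{V}$ and properties (i)--(ii), whereas the paper simply cites the maximality inequality $D_f(\rho\|\sigma)\leq D_f^{\max}(\rho\|\sigma)=D_f(r\|s)$ from Theorem~\ref{theorem:maintheo}.
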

\begin{proof}
By Theorem \ref{theorem:maintheo}, there exist classical distributions $r,s$ depending only on $\rho,\sigma$ such that
$D_f^{\max}(\rho\|\sigma)  = D_f(r\|s)$ and
$D_g^{\max}(\rho\|\sigma)  = D_g(r\|s)$.
Since by hypothesis we have 
$D_f(r\|s)\leq \phi(D_g(r\|s))$
it follows that
$
D_f(\rho\|\sigma)\leq D_f^{\max}(\rho\|\sigma)\leq\phi(D_g^{\max}(\rho\|\sigma))
$.
\end{proof}
In the sequel, we use Corollary~\ref{ctq} to generalize both classical Pinsker and reverse Pinsker inequalities to the quantum case.

\section{Applications to (Reverse) Pinkser Inequalities}
\label{sec:application}

\subsection{Improved Quantum Pinsker Inequality for $\chi^2$-divergence}
\label{sub-sec:Q-Pinsker}
Pinsker's inequality~\cite{Pinsker1960}, first established by Schützenberger~\cite{Schutzenberger54}---see \cite{Rioul2024} for an extensive historical review---is a fundamental result in information theory that provides a lower bound on the Kullback-Leibler divergence  between two probability distributions in terms of their total variation distance (or trace distance in the quantum case):
\begin{equation}
\begin{aligned}
D(p\|q) &\ge \frac{1}{2} \|p-q\|_1^2\\
D(\rho\|\sigma) &\ge \frac{1}{2} \|\rho-\sigma\|_1^2\\ 
\end{aligned}
\end{equation}
where divergences are expressed in nats (with logs to base $e$).

Pinsker's inequality  has been generalized to $f$-divergences, particularly to the chi-squared divergence 
defined by~\eqref{eq:chi2},~\eqref{eq:chi2Q}.
A well-known Pinsker inequality for $\chi^2$-divergence is~\cite{quant-chi2}
\begin{equation}
\begin{aligned}
\chi^2(p || q) &\geq \|p - q\|_1^2\\
\chi^2(\rho || \sigma) &\geq \|\rho - \sigma\|_1^2. 
\end{aligned}
\end{equation}
Based on the following classical improvement of Pinsker's inequality~\cite[Eq. (138)]{classical_chi}.  
\begin{equation}\label{classical_chi_in}
\chi^2(p || q) \geq 
\begin{cases}
\|p - q\|_1^2, & \text{if } \|p - q\|_1 \in [0, 1], \\
\frac{\|p - q\|_1}{2 - \|p - q\|_1}, & \text{if } \|p - q\|_1 \in (1, 2].
\end{cases}
\end{equation}
we easily obtain the following
\begin{theorem}[Improved Quantum Pinsker Inequality for $\chi^2$-Divergence]\label{thm:improved-pinkser}
The quantum chi-squared divergence $\chi^2(\rho || \sigma)$ satisfies the following inequality
\begin{equation}
\chi^2(\rho || \sigma) \geq 
\begin{cases}
\|\rho - \sigma\|_1^2, & \text{if } \|\rho - \sigma\|_1 \in [0, 1], \\
\frac{\|\rho - \sigma\|_1}{2 - \|\rho - \sigma\|_1}, & \text{if } \|\rho - \sigma\|_1 \in (1, 2].
\end{cases}
\end{equation}
for any quantum states $\rho$, $\sigma$.
\end{theorem}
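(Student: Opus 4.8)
The plan is to reduce the quantum statement to its classical counterpart~\eqref{classical_chi_in} by feeding the explicit classical distributions of Theorem~\ref{theorem:maintheo} into it, exploiting the fact established in~\eqref{eq:chi2max} that for $f(x)=x^2-1$ the maximal $f$-divergence coincides with the quantum $\chi^2$-divergence. Write $\psi$ for the piecewise function on the right-hand side of~\eqref{classical_chi_in}, namely $\psi(t)=t^2$ for $t\in[0,1]$ and $\psi(t)=t/(2-t)$ for $t\in(1,2]$, so that~\eqref{classical_chi_in} reads $\chi^2(p\|q)\ge\psi(\|p-q\|_1)$ for all classical $p,q$ (note $\|p-q\|_1\le 2$ always).

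The steps, in order, would be the following. First, invoke Theorem~\ref{theorem:maintheo} to obtain classical distributions $r,s$, depending only on $\rho,\sigma$, with $(\chi^2)^{\max}(\rho\|\sigma)=\chi^2(r\|s)$ and, for $f(x)=|x-1|$, $\|\rho-\sigma\|_1^{\max}=\|r-s\|_1$ --- the same pair $r,s$ serving both, which is the whole point of the construction. Second, observe that the trace distance $\|\rho-\sigma\|_1$ is itself a quantum $f$-divergence for $f(x)=|x-1|$ (it reduces to total variation on commuting states and contracts under CPTP maps), so the maximality bound~\eqref{ineq:maxdiv} gives $\|\rho-\sigma\|_1\le\|\rho-\sigma\|_1^{\max}=\|r-s\|_1\le 2$. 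Third, apply the classical inequality~\eqref{classical_chi_in} to $r,s$ and combine with~\eqref{eq:chi2max} and Theorem~\ref{theorem:maintheo} to get $\chi^2(\rho\|\sigma)=(\chi^2)^{\max}(\rho\|\sigma)=\chi^2(r\|s)\ge\psi(\|r-s\|_1)$. Fourth, conclude by monotonicity: since $\psi$ is non-decreasing and $\|r-s\|_1\ge\|\rho-\sigma\|_1$, we obtain $\psi(\|r-s\|_1)\ge\psi(\|\rho-\sigma\|_1)$, and $\psi(\|\rho-\sigma\|_1)$ is by definition precisely the two-case expression in the theorem. Equivalently, one may phrase all of this as Corollary~\ref{ctq} applied with $f(x)=|x-1|$, $g(x)=x^2-1$ and $\phi=\psi^{-1}$, after rewriting~\eqref{classical_chi_in} in the form $\|p-q\|_1\le\psi^{-1}\bigl(\chi^2(p\|q)\bigr)$.

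The only point requiring verification --- the main, though entirely elementary, obstacle --- is that $\psi$ is non-decreasing and continuous on $[0,2]$: the branch $t\mapsto t^2$ is increasing on $[0,1]$, the branch $t\mapsto t/(2-t)$ has positive derivative $2/(2-t)^2$ on $(1,2)$, and the two branches agree (both equal $1$) at $t=1$. This monotonicity legitimizes the final comparison $\psi(\|r-s\|_1)\ge\psi(\|\rho-\sigma\|_1)$ (and, in the Corollary-based phrasing, the invertibility of $\psi$ needed to pass to $\psi^{-1}$). Everything else is direct substitution, and no matrix analysis beyond Theorem~\ref{theorem:maintheo} is needed.
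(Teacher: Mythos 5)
Your proof is correct and follows essentially the same route as the paper: the paper also identifies $\chi^2(\rho\|\sigma)$ with $(\chi^2)^{\max}(\rho\|\sigma)$ via~\eqref{eq:chi2max} and then applies Corollary~\ref{ctq} with $f(x)=|x-1|$, $g(x)=x^2-1$ to the classical bound~\eqref{classical_chi_in}. The only difference is that you unpack Corollary~\ref{ctq} explicitly through the distributions $r,s$ of Theorem~\ref{theorem:maintheo} and verify the monotonicity and continuity of $\psi$ needed to pass to $\phi=\psi^{-1}$, a point the paper leaves implicit.
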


\begin{proof}
Observe that the quantum $\chi^2$-divergence is the maximal $f$-divergence for $f(x)=x^2-1$,
by~\eqref{eq:chi2max}.
The result is then immediate from~\eqref{classical_chi_in} by Corollary~\ref{ctq} applied to $f(x)= |x-1|$ and $g(x)=x^2-1$.
\end{proof}

\paragraph*{Application}

Using Theorem~\ref{thm:improved-pinkser} we can improve Temme \textit{et al.} bound in decoherence theory~\cite{quant-chi2}. The authors consider a quantum system time evolution given by a Liouvillian of unique equilibrium state $\sigma$.
Letting $\rho_t$ the state of the quantum system at time $t$, the trace distance $\|\rho_t-\sigma\|_1$ eventually vanishes as time goes by.
It is important when designing quantum technology to evaluate the rate of convergence to $0$ of this quantity is order to estimate how long can one keep significant quantum information.
In \cite[Eq.~(43) for $k=0$]{quant-chi2} the authors show that
\begin{equation}
    \|\rho_t-\sigma \|_1^2 \leq \chi^2(\rho_t\|\sigma) \leq e^{-\lambda t} \chi^2(\rho_0\|\sigma)
\end{equation}
where $\lambda > 0$ is the second largest eigenvalue of the Liouvillian defined in their article. 
Now, improving the first inequality  $\|\rho-\sigma\|_1^2 \leq  \chi^2(\rho\|\sigma)$ using Theorem~\ref{thm:improved-pinkser}, one readily obtains the following improved bound for short times:
\begin{equation}
\|\rho_t - \sigma\|_1 \leq
\begin{cases}
\frac{2e^{-\lambda t} \chi^2(\rho_0 || \sigma)}{1 + e^{-\lambda t} \chi^2(\rho_0 || \sigma)}, & \text{if } e^{\lambda t}< \chi^2(\rho_0 || \sigma) \\
e^{-\frac{\lambda}{2} t} \chi(\rho_0 || \sigma), & \text{otherwise,} 
\end{cases}
\end{equation}
where we have noted $\chi=\sqrt{\chi^2}$.

Figure~\ref{fig:bound_comparison} illustrates this improvement, which is naturally all the more important as $\chi^2(\rho_0 || \sigma)$ is large, i.e.,   the starting state is far away from the stationary one.
\begin{figure}[h]
    \centering
    \includegraphics[width=0.49\textwidth,height=6cm]{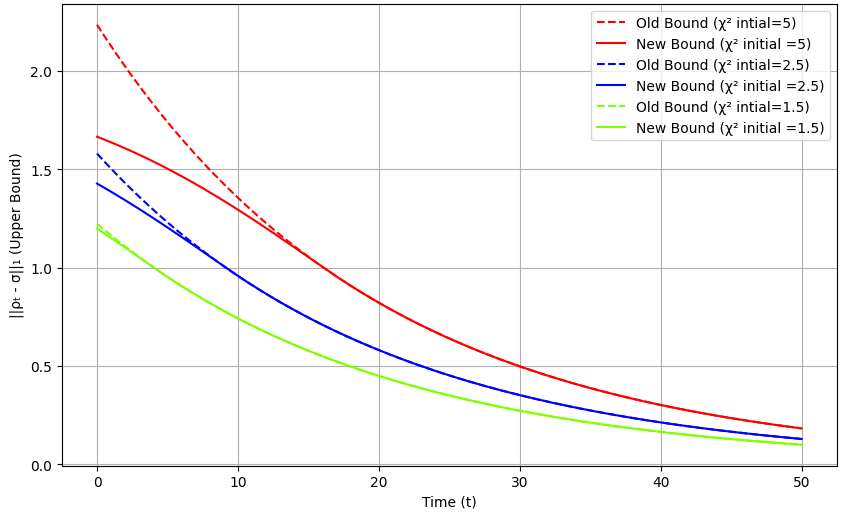}
    \caption{Comparison of the previous Temme \textit{et al.} bound  and the improved bound of this paper on the trace norm distance $||\rho_t - \sigma||_1$ as a function of time, when $\lambda = 0.1$. Different curves correspond to different initial values of the chi-squared divergence $\chi^2(\rho_0 || \sigma)$. The improved bound provides a tighter estimate at earlier times, particularly for larger initial $\chi^2(\rho_0 || \sigma)$.}
    \label{fig:bound_comparison}
\end{figure}

\subsection{Reverse Pinsker Inequalities for $f$-Divergences}

While Pinsker's inequality gives a lower bound on the $f$-divergence in terms of the total variation distance,
a \emph{reverse Pinsker inequality} gives an upper bound on the $f$-divergence, under some conditions on the distributions. This again can be extended in the quantum setting in the form of a upper bound in terms of the trace distance between states:
\begin{equation}
 D_f(\rho\|\sigma) \leq \phi (\|\rho - \sigma\|_1).
\end{equation}
In this section, we use Corollary~\ref{ctq} to derive a reverse Pinsker inequality for any $f$-divergence, based on Binette's sharp (classical) reverse Pinsker inequality~\cite{Binette_2019}:
\begin{equation}\label{eq:Binette}
D_f(p\|q)
\leq
\frac{\|p-q\|_1}{2} \Bigl ( \frac{f(m)}{1-m} +\frac{f(M)}{M-1} \Bigr)
\end{equation}
where $m=\min_i\frac{p_i}{q_i}$ and $M=\max_i\frac{p_i}{q_i}$.
\begin{theorem} \label{thm:QrevPinskerbound}
For any mixed states $\rho, \sigma$ satisfying the condition
\begin{equation}\label{eq:cns}
 | \rho-\sigma | \leq \rho+\sigma,
\end{equation}
we have
\begin{equation}\label{eq:QuantumBinette}
    D_f(\rho\|\sigma)
    \leq
     \frac{\|\rho-\sigma\|_1}{2}
    \biggl (\frac{f(m)}{1-m} +\frac{f(M)}{M-1} \biggr ).
\end{equation}
where $m=\lambda_{\min}(\sigma^{-\frac{1}{2}}\rho\sigma^{-\frac{1}{2}})$ 
and $ M = \lambda_{\max}(\sigma^{-\frac{1}{2}}\rho\sigma^{-\frac{1}{2}})$ are the smallest and largest eigenvalues of $\sigma^{-\frac{1}{2}}\rho\sigma^{-\frac{1}{2}}$, respectively. 
Furthermore, the inequality is sharp: equality holds for some $\rho,\sigma$ satisfying~\eqref{eq:cns}.
\end{theorem}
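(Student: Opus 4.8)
The plan is to apply Corollary~\ref{ctq} to Binette's classical inequality~\eqref{eq:Binette}. First I would recall from Theorem~\ref{theorem:maintheo} that there exist classical distributions $r,s$, built from the spectral decomposition of $\sigma^{-1/2}\rho\sigma^{-1/2}$, such that $r_i/s_i = \lambda_i$, where the $\lambda_i$ are the eigenvalues of $\sigma^{-1/2}\rho\sigma^{-1/2}$. In particular $\min_i r_i/s_i = \lambda_{\min}(\sigma^{-1/2}\rho\sigma^{-1/2}) = m$ and $\max_i r_i/s_i = \lambda_{\max}(\sigma^{-1/2}\rho\sigma^{-1/2}) = M$, so applying~\eqref{eq:Binette} to $(r,s)$ gives $D_f(r\|s) \le \tfrac{\|r-s\|_1}{2}\bigl(\tfrac{f(m)}{1-m} + \tfrac{f(M)}{M-1}\bigr)$. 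Combining with $D_f(\rho\|\sigma) \le D_f^{\max}(\rho\|\sigma) = D_f(r\|s)$ from Theorem~\ref{theorem:maintheo}, the only remaining task to establish~\eqref{eq:QuantumBinette} is to show that $\|r-s\|_1 \le \|\rho-\sigma\|_1$.

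The key step is therefore the bound $\|r-s\|_1 \le \|\rho-\sigma\|_1$. Here I would use the channel $\mathcal{V}$ from~\eqref{eq:channelV}: since $\mathcal{V}(r)=\rho$ and $\mathcal{V}(s)=\sigma$, if $\mathcal{V}$ preserved the trace norm of the difference we would have equality, but in general a channel only contracts, so we get $\|\rho-\sigma\|_1 = \|\mathcal{V}(r-s)\|_1 \le \|r-s\|_1$, the \emph{wrong} direction. So instead I expect the condition~\eqref{eq:cns}, $|\rho-\sigma|\le\rho+\sigma$, is precisely what is needed to reverse this: I would compute $\|r-s\|_1 = \sum_i |r_i - s_i| = \sum_i |\lambda_i-1|\,s_i = \sum_i |\lambda_i - 1|\,\braopket{u_i}{\sigma}{u_i} = \Tr\bigl(|\sigma^{-1/2}\rho\sigma^{-1/2} - I|\,\sigma\bigr) = \Tr\bigl(\sigma^{1/2}|\sigma^{-1/2}(\rho-\sigma)\sigma^{-1/2}|\sigma^{1/2}\bigr)$, which is exactly the maximal $f$-divergence $\|\rho-\sigma\|_1^{\max}$ for $f(x)=|x-1|$. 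Thus the needed inequality is $\|\rho-\sigma\|_1^{\max} \le \|\rho-\sigma\|_1$, i.e.\ the ordinary trace distance dominates its maximal-divergence version, and I claim this holds precisely under~\eqref{eq:cns}. To prove it I would diagonalize $\rho-\sigma = \sum_j \mu_j \ketbra{v_j}{v_j}$ so that $\|\rho-\sigma\|_1 = \sum_j |\mu_j|$, write $\|\rho-\sigma\|_1^{\max} = \Tr\bigl(\sigma^{1/2}|\sigma^{-1/2}(\rho-\sigma)\sigma^{-1/2}|\sigma^{1/2}\bigr)$, and compare the two using the operator-convexity/operator-monotonicity of the function $|x|$ on the relevant domain together with~\eqref{eq:cns}; concretely, $|\rho-\sigma|\le\rho+\sigma$ should translate (after congruence by $\sigma^{-1/2}$) into a statement that controls $|\sigma^{-1/2}(\rho-\sigma)\sigma^{-1/2}|$ by $\sigma^{-1/2}(\rho+\sigma)\sigma^{-1/2}$, whose weighted trace against $\sigma$ collapses to $\Tr(\rho+\sigma)=2$ minus a correction, yielding $\|\rho-\sigma\|_1$. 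This comparison between $\|\cdot\|_1$ and $\|\cdot\|_1^{\max}$ under condition~\eqref{eq:cns} is the main obstacle, since it is the one place the hypothesis enters and the one step not handed to us by Corollary~\ref{ctq}.

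Finally, for sharpness I would exhibit an explicit commuting pair (classical states), for which all three of $\rho,\sigma$ and the extremal configuration in Binette's bound coincide with the classical optimizer of~\eqref{eq:Binette}; since for commuting $\rho,\sigma$ we have $D_f(\rho\|\sigma)=D_f^{\max}(\rho\|\sigma)$ and $\|\rho-\sigma\|_1^{\max}=\|\rho-\sigma\|_1$ and~\eqref{eq:cns} is automatically satisfied (as $|p_i-q_i|\le p_i+q_i$ entrywise), the quantum bound~\eqref{eq:QuantumBinette} reduces exactly to the classical~\eqref{eq:Binette}, whose sharpness is known~\cite{Binette_2019}. Picking the two-point distribution attaining equality in~\eqref{eq:Binette} thus gives equality in~\eqref{eq:QuantumBinette}, completing the proof.
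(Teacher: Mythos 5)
Your skeleton matches the paper's: apply Binette's inequality~\eqref{eq:Binette} to the classical pair $(r,s)$ of Theorem~\ref{theorem:maintheo} (equivalently, invoke Corollary~\ref{ctq} with $g(x)=|x-1|$), identify $m=\min_i r_i/s_i$ and $M=\max_i r_i/s_i$ with the extreme eigenvalues of $\sigma^{-1/2}\rho\sigma^{-1/2}$, and reduce sharpness to the classical equality case for commuting states. You also correctly locate where hypothesis~\eqref{eq:cns} must enter: you need $\|r-s\|_1=\Tr\bigl(\sigma^{1/2}\,|\sigma^{-1/2}(\rho-\sigma)\sigma^{-1/2}|\,\sigma^{1/2}\bigr)=\|\rho-\sigma\|_1^{\max}\leq\|\rho-\sigma\|_1$. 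But that is exactly the step you do not prove, and it is the only load-bearing one. Your sketch appeals to ``operator-convexity/operator-monotonicity of $|x|$,'' which is not available: $x\mapsto|x|$ is neither operator convex nor operator monotone, so the proposed congruence-by-$\sigma^{-1/2}$ argument has no valid tool behind it, and the remark that the weighted trace ``collapses to $\Tr(\rho+\sigma)=2$ minus a correction'' does not yield the inequality. Note also that since the trace distance is itself a quantum $f$-divergence for $f(x)=|x-1|$, the opposite inequality $\|\rho-\sigma\|_1\leq\|\rho-\sigma\|_1^{\max}$ always holds, so what you actually need under~\eqref{eq:cns} is \emph{equality}; this is precisely Matsumoto's characterization, which the paper simply cites (\cite[Eq.~(64)]{matsumoto2018}: \eqref{eq:cns} is necessary and sufficient for $D_g^{\max}(\rho\|\sigma)=\|\rho-\sigma\|_1$ with $g(x)=|x-1|$). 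Without that citation, or an actual proof (e.g., starting from the observation that \eqref{eq:cns} is equivalent to $(\rho-\sigma)_-\leq\sigma$, where $(\rho-\sigma)_-$ is the negative part), your argument is incomplete at its crucial point.

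Two smaller remarks. First, your aside that ``$\|\rho-\sigma\|_1=\|\mathcal{V}(r-s)\|_1\leq\|r-s\|_1$ goes the wrong way'' is correct and is a good sanity check that the data-processing route cannot supply the missing inequality, consistent with the above. Second, in the sharpness part, the classical optimizers of~\eqref{eq:Binette} are the \emph{ternary} distributions of \cite[Eq.~(16)]{Binette_2019}, not two-point ones; the reduction to commuting states (which automatically satisfy~\eqref{eq:cns}) is otherwise the same as in the paper.
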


\begin{proof}
According to~\cite[Eq.~(64)]{matsumoto2018}, \eqref{eq:cns} is a (necessary and) sufficient condition for the maximal $g(x)=|x-1|$-divergence to coincide with the trace distance:
$D_g^{\max} (\rho\|\sigma) = \|\rho-\sigma\|_1$.
%
Also observe that from the classical distribution representation of maximal $f$-divergence of Theorem~\ref{theorem:maintheo}, we have
\begin{equation}\label{eq:m-and-M}
\begin{aligned}
 m&\triangleq \min_i\frac{r_i}{s_i} = \min_i\lambda_i = \lambda_{\min}(\sigma^{-\frac{1}{2}}\rho\sigma^{-\frac{1}{2}})\\
 M&\triangleq \max_i\frac{r_i}{s_i} = \max_i\lambda_i = \lambda_{\max}(\sigma^{-\frac{1}{2}}\rho\sigma^{-\frac{1}{2}})
\end{aligned}
\end{equation}
which depend only on $\rho,\sigma$.
Therefore, we may apply Corollary~\ref{ctq} to $\phi (x)=\frac{x}{2} (\frac{f(m)}{1-m} +\frac{f(M)}{M-1} )$, $D_g^{\max} (\rho\|\sigma) = \|\rho-\sigma\|_1$.
This gives~\eqref{eq:QuantumBinette}. Equality is achieved for commuting states $\rho, \sigma$, which necessarily satisfy~\eqref{eq:cns} and correspond to the ternary classical distributions given in~\cite[Eq.~(16)]{Binette_2019} achieving equality in~\eqref{eq:Binette}.
\end{proof}


\subsection{Comparison to Hirche-Tomamichel's Bound}

First notice comparing~\eqref{eq:m-and-M} to~\eqref{eq:maxrelent} that in terms of the max-relative entropy,
\begin{equation}\label{eq:identificationDmax}
\begin{aligned}
M&=\exp \bigl( D_{\max}(\rho\|\sigma) \bigr)\\
m&=\exp \bigl(-D_{\max}(\sigma\|\rho) \bigr) 
\end{aligned}
\end{equation}
where the latter equality comes from 
$m=\lambda_{\min} \bigl ( \rho^{\frac{1}{2}}\sigma^{-1}\rho^{\frac{1}{2}} \bigr )=
\bigl(\lambda_{\max}(\rho^{-\frac{1}{2}}\sigma\rho^{-\frac{1}{2}})\bigr)^{-1}$.
Hirche and Tomamichel's bound~\cite[Prop.~5.2]{Hirche_2024} is
\begin{equation}
    D_f(\rho \| \sigma) 
    \leq 
    \frac{\|\rho - \sigma \|_1}{2}
    \zeta_1(\rho,\sigma)
\end{equation}
where $\zeta_1(\rho,\sigma)$ is defined by a complicated integral relation~\cite[Eq.~(5.12)]{Hirche_2024}
which from~\eqref{eq:identificationDmax} reduces to
\begin{align}
    \zeta_1(\rho,\sigma)\!&\triangleq\!\int_1^{M}\!\tfrac{M-\gamma}{M - 1} f''(\gamma) d\gamma 
    \!+\!
    \int_{1}^{\frac{1}{m}}
    \tfrac{\frac{1}{m} - \gamma}{\frac{1}{m} - 1}
    \gamma^{-3} f''(\gamma^{-1}) d\gamma
    \notag\\\label{eq:chgvar}
    &=
    \int_1^{M}\!\tfrac{M-\gamma}{M - 1} f''(\gamma) d\gamma 
    +
    \int_m^{1}\!\tfrac{\gamma - m}{1 - m} f''(\gamma) d\gamma 
    \\\label{eq:ipp}
    &=
    \frac{f(M)}{M-1} - f'(1) + \frac{f(m)}{1-m} + f'(1)
\end{align}
where we made the change of variable $\gamma \to \gamma^{-1}$ in the second integral in~\eqref{eq:chgvar},
and integrated by parts both integrals in~\eqref{eq:ipp}.
Thus as it turns out, we find that Hirche-Tomamichel's bound \emph{matches exactly} the bound of Theorem~\ref{thm:QrevPinskerbound}, which is the natural extension of Binette's bound to the quantum case.

There are some differences, however:
\begin{itemize}
    \item Hirche-Tomamichel's bound was derived only  for their specific $f$-divergence defined as an integral  over hockey stick divergences. Our bound holds in general for any $f$-divergence, in particular for the maximal $f$-divergence.
    \item Hirche and Tomamichel assume that $f$ is twice differentiable, which is not needed in this paper.
    \item on the downside, we assume condition~\eqref{eq:cns}, which is not needed by Hirche and Tomamichel. Our simulations show, however, that randomly drawn states in dimension~4 satisfy~\eqref{eq:cns} more than 80\% of the time. We conjecture that~\eqref{eq:cns} is in fact not necessary for~\eqref{eq:QuantumBinette} to hold. 
\end{itemize}


\subsection{Comparison to Audenaert-Eisert's Bound}

In the particular case $f(x)=x\log x$, Audenaert \& Eisert~\cite[Thm.~1]{Audenaert_2011} derived the following bound:
\begin{equation}\label{eq:AEbound}
   D(\rho\|\sigma)
    \leq 
    (\beta+  \tfrac{\|\rho-\sigma\|_1}{2})\log(1+ \tfrac{\|\rho-\sigma\|_1}{2\beta}) 
    -  \alpha \log(1+ \tfrac{\|\rho-\sigma\|_1}{2\alpha})
\end{equation} 
where $\alpha=\lambda_{\min}(\rho)$ and $\beta=\lambda_{\min}(\sigma)$ and the second term vanishes for $\alpha=0$.
Fig.~\ref{fig:aud_comp} compares this bound to our bound~\eqref{eq:QuantumBinette} for $f(x)=x\log x$ on 10,000 randomly drawn $4\times4$ positive matrices $(\rho,\sigma)$ of unit trace satisfying~\eqref{eq:cns}.

Our Python code for all figures in this paper  is available here: \hbox{\url{https://anonymous.4open.science/r/code_rech-7004}}.

\begin{figure}[h!]
    \centering
    \includegraphics[scale=0.6]{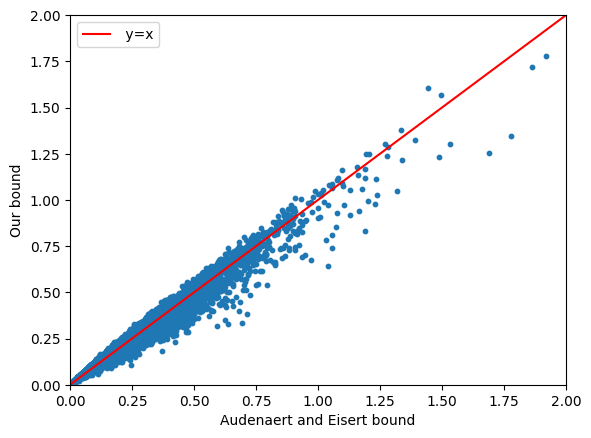}
    \caption{Scatter plot comparing our bound~\eqref{eq:QuantumBinette} to Audenaert-Eisert's bound~\eqref{eq:AEbound} on 10,000 random positive $4\times4$ matrices of unit trace. Points below the diagonal red line are instances where our bound~\eqref{eq:QuantumBinette} is tighter.}
    \label{fig:aud_comp}
\end{figure}
%
%

\section{Conclusion}

In this work, we have revisited Matsumoto's maximal divergence theorem by providing explicit classical distributions achieving maximal $f$-divergence. 
Leveraging these distributions, we showed that classical $f$-divergence inequalities 
can be easily extended to their quantum counterparts. This approach simplifies the derivation of quantum inequalities, as it avoids complex matrix analysis and relies instead on simple linear algebra.
We applied our methodology to find both quantum Pinsker and quantum reverse Pinsker inequalities.
In particular, the reverse Pinsker inequality of Hirche and Tomamichel was found to match the quantum generalization of the classical reverse Pinsker inequality of Binette.
Our quantum Pinsker inequality for $\chi^2$ divergence was also applied to improve a bound on the evolution of quantum states under dynamical equations, which illustrates the operational relevance of our approach.

\appendix

The following Lemmas seem well known in certain circles, yet we could not find a precise reference.
\begin{lemma}\label{lem:identity}
Let $A,B$ be normal operators in a Hilbert space of dimension $n$. Then for any functions $f,g,h:\mathbb{C}\to\mathbb{C}$,
\begin{equation}
\tr \bigl(g(A)f(AB)h(A)\bigr) = \tr \bigl(g(A)f(BA)h(A)\bigr)
\end{equation}
\end{lemma}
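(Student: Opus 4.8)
The plan is to reduce the operator identity to a purely algebraic statement about traces of products of diagonalizable matrices, then exploit cyclicity of the trace. First I would invoke the spectral theorem: since $A$ is normal, write $A = \sum_j \mu_j P_j$ where the $\mu_j$ are the distinct eigenvalues and the $P_j$ are the orthogonal spectral projectors, so that for any function $g$ one has $g(A) = \sum_j g(\mu_j) P_j$, and similarly $h(A) = \sum_j h(\mu_j) P_j$. The delicate term is $f(AB)$ (resp.\ $f(BA)$): here $AB$ need not be normal, so $f(AB)$ must be understood via the holomorphic functional calculus (or, if one prefers, one reduces to polynomials by density/approximation on the spectrum, which suffices since only finitely many eigenvalues are involved). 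The key observation is that $AB$ and $BA$ have the same spectrum, and more precisely $B\cdot(AB) = (BA)\cdot B$, which by induction gives $B\,(AB)^k = (BA)^k\,B$ for every $k\geq 0$, hence $B\,p(AB) = p(BA)\,B$ for every polynomial $p$, and then $B\,f(AB) = f(BA)\,B$ for the functional calculus.

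Next I would substitute the spectral expansions. The left-hand side becomes
\begin{equation}
\tr\bigl(g(A)f(AB)h(A)\bigr) = \sum_{j,k} g(\mu_j)h(\mu_k)\,\tr\bigl(P_j\,f(AB)\,P_k\bigr),
\end{equation}
and similarly for the right-hand side with $f(BA)$. So it suffices to show $\tr(P_j f(AB) P_k) = \tr(P_j f(BA) P_k)$ for each pair $(j,k)$. Using cyclicity, $\tr(P_j f(AB) P_k) = \tr(P_k P_j f(AB))$, and since the $P_j$ are mutually orthogonal idempotents this vanishes unless $j=k$, in which case it equals $\tr(P_j f(AB) P_j)$; the same holds on the right. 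So the whole identity collapses to proving $\tr(P_j f(AB) P_j) = \tr(P_j f(BA) P_j)$ for each $j$ — or even more simply, since summing over $j$ with weights $g(\mu_j)h(\mu_j)$ reconstructs $g(A)h(A) = (gh)(A)$, to proving $\tr\bigl((gh)(A) f(AB)\bigr) = \tr\bigl((gh)(A) f(BA)\bigr)$, i.e.\ it suffices to treat the case of two functions, $\tr(G(A)f(AB)) = \tr(G(A)f(BA))$.

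For this reduced statement I would use the intertwining relation above together with cyclicity. One clean route: if $A$ is invertible, write $BA = A^{-1}(AB)A$, so $f(BA) = A^{-1}f(AB)A$ by the functional calculus, and then $\tr(G(A)f(BA)) = \tr(G(A)A^{-1}f(AB)A) = \tr(A\,G(A)A^{-1}f(AB)) = \tr(G(A)f(AB))$ since $A$, $A^{-1}$, $G(A)$ all commute. The non-invertible case follows by a perturbation/continuity argument: replace $A$ by $A+\varepsilon I$ (still normal, and the distinct-eigenvalue structure only improves), both sides are polynomial — hence continuous — in $\varepsilon$ on a neighborhood of $0$ once $f,g,h$ are taken to be polynomials agreeing with the given functions on the (finite) spectra involved, and let $\varepsilon\to 0$. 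Alternatively, avoid invertibility entirely by arguing directly with $B\,p(AB) = p(BA)\,B$ and a basis-of-polynomials argument, but the perturbation trick is the shortest to write.

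The main obstacle I anticipate is purely a matter of rigor rather than ideas: justifying $f(AB)$ and the manipulation $B f(AB) = f(BA) B$ (and $f(BA) = A^{-1} f(AB) A$) for a general function $f:\mathbb{C}\to\mathbb{C}$ when $AB$ is not normal and $f$ is not holomorphic. The honest fix is to note that only the values of $f$ on the finite set $\mathrm{spec}(AB) = \mathrm{spec}(BA)$ enter — indeed $f(AB)$ is only defined here through such a functional calculus — so one may replace $f$ by an interpolating polynomial (a Lagrange/Hermite interpolant matching $f$ and enough derivatives at each eigenvalue of $AB$ to account for non-diagonalizability), reducing everything to the polynomial case where $B\,p(AB) = p(BA)\,B$ is elementary and cyclicity of the trace does the rest. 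I would state this reduction explicitly at the start of the proof and then the remaining computation is routine.
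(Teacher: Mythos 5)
Your argument is correct, and it shares with the paper the key reduction step: since only the values of $f$ on the finite spectrum of $AB$ matter, replace $f$ by an interpolating polynomial, so that the identity only needs to be proved for polynomials. Where you diverge is in how the polynomial case is settled. The paper simply expands to monomials and computes
$\tr\bigl(g(A)(AB)^k h(A)\bigr)=\tr\bigl(g(A)\,A\,BAB\cdots AB\,h(A)\bigr)=\tr\bigl(g(A)\,BAB\cdots AB\,A\,h(A)\bigr)=\tr\bigl(g(A)(BA)^k h(A)\bigr)$,
using only cyclicity of the trace and the commutations $Ag(A)=g(A)A$, $Ah(A)=h(A)A$; this needs neither invertibility of $A$ nor any limiting argument. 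You instead pass through the similarity $f(BA)=A^{-1}f(AB)A$, which forces the invertibility hypothesis and then the $\varepsilon$-perturbation $A\mapsto A+\varepsilon I$ with a polynomial-in-$\varepsilon$ continuity argument to remove it; that is valid (the identity holds for all but finitely many $\varepsilon$ and both sides are polynomials in $\varepsilon$), but it is heavier than necessary, and your alternative sketch via $B\,p(AB)=p(BA)\,B$ is essentially the paper's cyclicity computation, so you could bypass the perturbation entirely. Likewise, your spectral-projector reduction of $g,h$ to a single function $(gh)(A)$ is correct but can be replaced by the one-line observation $\tr\bigl(g(A)Xh(A)\bigr)=\tr\bigl((gh)(A)X\bigr)$, again by cyclicity and commutation. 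One genuine merit of your write-up: you explicitly flag that $f(AB)$ for a general $f:\mathbb{C}\to\mathbb{C}$ is only meaningful through interpolation on $\mathrm{spec}(AB)$ and that $AB$ need not be diagonalizable; the paper glosses over this (it even asserts an orthonormal eigenbasis for $AB$, which need not exist, though in the paper's application $AB$ is similar to a positive semidefinite operator, so diagonalizability does hold there).
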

\begin{proof}
Let  $AB= \sum_i \lambda_i  \ketbra{u_i}{u_i}$
be the spectral decomposition of  $AB$ with eigenvalues $\lambda_i\geq 0$ and 
orthonormal eigenvectors $\ket{u_i}$, so that $f(AB)= \sum_i f(\lambda_i)  \ketbra{u_i}{u_i}$. Letting $P$ be the Lagrange interpolation polynomial of degree $\leq n$ such that $P(\lambda_k)=f(\lambda_k)$ for $k=1,2,\ldots,n$, we observe that $f(AB)=P(AB)$ (where the coefficients of $P$ depend, in general, of $AB$). Therefore, by linearity of the trace, it suffices to prove the identity for $f(x)=x^k$.
Now since $Ag(A)=g(A)A$ and $Ah(A)=h(A)A$,
\begin{equation}
\begin{aligned}
\tr \bigl(g(A)(AB)^k h(A)\bigr) &=\tr \bigl(g(A)ABAB\cdots AB h(A)\bigr)\\
&=\tr \bigl(g(A) BAB\cdots ABA h(A)\bigr)\\
&= \tr \bigl(g(A)(BA)^k h(A)\bigr)
\end{aligned}
\end{equation}
which ends the proof.
\end{proof}
In particular, if $f(x)=x\, g(x)$, we obtain
\begin{equation}\label{eq:identity2}
\begin{aligned}
\tr \bigl(A^{-1}f(BA)\bigr)&=\tr \bigl(A^{-1}f(AB)\bigr)\\
&=\tr\bigl( B\,g(AB)\bigr)=\tr\bigl( B\,g(BA)\bigr). 
\end{aligned}
 \end{equation}

\begin{lemma}\label{lem:opconvex} 
If $f$ is operator convex, then for any real numbers $x_i\geq 0$ and positive operators $\Lambda_i\geq 0$ such that $\sum_i \Lambda_i=I$, we have
\begin{equation}
f\bigl(\sum_i \Lambda_i x_i \bigr) \leq \sum_i \Lambda_i f(x_i).
\end{equation}
\end{lemma}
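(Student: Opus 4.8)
The plan is to reduce the operator Jensen inequality with scalar arguments $x_i$ and operator weights $\Lambda_i$ (summing to $I$) to the standard operator convexity statement, which concerns convex combinations of \emph{operators} with \emph{scalar} weights. The key device is a purification / dilation trick: introduce an auxiliary Hilbert space of dimension equal to the number of terms, spanned by an orthonormal basis $\ket{e_i}$, and build an isometry $V$ from the original space into the tensor product that encodes the weights $\Lambda_i$ via their positive square roots $\Lambda_i^{1/2}$.

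Concretely, first I would set $V = \sum_i \Lambda_i^{1/2} \otimes \ket{e_i}$, viewed as a map from the original space $\mathcal{H}$ to $\mathcal{H}\otimes\mathcal{K}$; the hypothesis $\sum_i \Lambda_i = I$ is exactly the statement $V^\dagger V = I$, so $V$ is an isometry. Next, let $X = \sum_i x_i\, I \otimes \ketbra{e_i}{e_i}$, a (block-diagonal) self-adjoint operator on $\mathcal{H}\otimes\mathcal{K}$ whose spectrum is contained in $\{x_i\}\subseteq[0,\infty)$, so $f(X) = \sum_i f(x_i)\, I\otimes\ketbra{e_i}{e_i}$ by the functional calculus. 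A direct computation gives $V^\dagger X V = \sum_i \Lambda_i x_i$ and $V^\dagger f(X) V = \sum_i \Lambda_i f(x_i)$. The statement to prove is therefore $f(V^\dagger X V) \leq V^\dagger f(X) V$, i.e. the Jensen operator inequality for an isometry. At this point I would invoke the standard fact that operator convexity of $f$ on $[0,\infty)$ with $f(0)\le 0$ (note $f(1)=0$ and convexity give control near $0$; more cleanly one uses the Hansen–Pedersen characterization) implies $f(V^\dagger X V) \leq V^\dagger f(X) V$ for every isometry $V$ and every positive $X$.

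If one prefers to avoid citing the isometry form of Jensen's inequality, the alternative is to extend $V$ to a unitary $U$ on a larger space $\mathcal{H}\otimes\mathcal{K}\otimes\mathbb{C}^2$ (possible since $V$ is an isometry), write $V^\dagger X V = P\, U^\dagger (X\oplus 0)\, U\, P$ where $P$ is the projection onto the first block, and apply the pinching/compression form of operator convexity: for operator convex $f$ with $f(0)\le 0$, $f(PYP) \le P f(Y) P$ for any projection $P$ and positive $Y$. Combining this with unitary covariance $f(U^\dagger Y U) = U^\dagger f(Y) U$ yields the claim. I would present whichever of these two routes is shortest given what the paper is willing to cite.

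The main obstacle is purely expository rather than mathematical: deciding how much of the classical operator-convexity machinery (Hansen–Pedersen–Jensen inequality, the equivalence of operator convexity with the compression inequality) to take as given versus prove from scratch. The paper's stated philosophy is to rely only on "basic linear algebra," so the cleanest writeup keeps the dilation construction explicit and elementary — the identities $V^\dagger V = I$, $V^\dagger X V = \sum_i \Lambda_i x_i$, $V^\dagger f(X)V = \sum_i \Lambda_i f(x_i)$ are all one-line checks — and cites exactly one black-box fact (Jensen's operator inequality for isometries, e.g. Hansen–Pedersen). No delicate estimate is needed; the only subtlety is making sure the functional calculus for $f(X)$ is legitimate, which holds because $X$ is self-adjoint with finite spectrum contained in the domain $[0,\infty)$ of $f$.
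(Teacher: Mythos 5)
Your proof is correct, and it rests on the same black box as the paper's — the Hansen--Pedersen Jensen operator inequality — but you reach it by a slightly longer route. The paper simply invokes the multi-term (unital) form $f\bigl(\sum_i A_i^\dag X_i A_i\bigr)\leq \sum_i A_i^\dag f(X_i)A_i$ for $\sum_i A_i^\dag A_i=I$ and substitutes $A_i=\Lambda_i^{1/2}$, $X_i=x_i\cdot I$, which makes the lemma a one-line specialization. You instead build the explicit dilation $V=\sum_i \Lambda_i^{1/2}\otimes\ket{e_i}$ and the block-diagonal $X=\sum_i x_i\, I\otimes\ketbra{e_i}{e_i}$, check $V^\dag V=I$, $V^\dag XV=\sum_i\Lambda_i x_i$, $V^\dag f(X)V=\sum_i\Lambda_i f(x_i)$, and then cite the isometry form $f(V^\dag XV)\leq V^\dag f(X)V$; this is essentially the standard argument showing the two formulations of Hansen--Pedersen are equivalent, so it buys a more self-contained reduction at the cost of a few extra lines. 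One small caveat: for an isometry (the unital case $V^\dag V=I$) no hypothesis $f(0)\le 0$ is needed, so your hedge about controlling $f$ near $0$ is superfluous there; it matters only for your alternative compression route $f(PYP)\le Pf(Y)P$, which genuinely requires $f(0)\le 0$ and would fail for operator convex functions like $f(x)=-\log x$ that the paper wants to cover. So if you keep your writeup, prefer the isometry version (or the paper's direct sum form) and drop the compression alternative.
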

\begin{proof}
Apply Jensen's operator inequality~\cite{HansenPedersen2003}:
\begin{equation}
 f\bigl(\smash[b]{\sum_i} A_i^\dag X_i A_i \bigr) \leq \smash[b]{\sum_i} A_i^\dag f(X_i) A_i.
\end{equation}
where the $X_i\geq 0$ are positive operators and $\sum_i A_i^\dag A_i =I$,
to $X_i=x_i \cdot I$ and $A_i=\sqrt{\Lambda_i}$.
\end{proof}

%


\IEEEtriggeratref{10}


\bibliographystyle{IEEEtran}
\bibliography{paper}

\end{document}